\documentclass[10pt,a4paper]{article}
\usepackage[T1]{fontenc}
\usepackage[utf8]{inputenc}
\usepackage[english]{babel}
\usepackage{a4wide}
\usepackage{authblk}
\usepackage{amsfonts}
\usepackage{xspace}
\usepackage{nameref,hyperref}
\usepackage{stmaryrd}
\usepackage{mathrsfs}
\usepackage{float}
\usepackage{graphicx}
\usepackage{url}
\usepackage{amsmath,amsthm,amssymb,color}
\usepackage{multicol}
\usepackage{caption}
\usepackage{subcaption}
\usepackage{numprint}
\usepackage{cleveref}
\usepackage{tikz}
\usepackage{physics}
\usepackage{pdflscape}
\usepackage{array,multirow}
\usepackage{relsize}
\usepackage{markdown}
\usepackage{empheq}
\usepackage{verbatim}
\usepackage{multirow}
\usepackage{algorithm}
\usepackage{algpseudocode}
\usepackage[normalem]{ulem} 
\usepackage{xcolor,colortbl}
\usepackage[backend=bibtex,firstinits,style=alphabetic,url=false,maxcitenames=3,maxnames=8]{biblatex}

\DeclareNameAlias{labelname}{given-inits}

\algnewcommand\algorithmicforeach{\textbf{for each}}
\algdef{S}[FOR]{ForEach}[1]{\algorithmicforeach\ #1\ \algorithmicdo}
\algnewcommand{\IIf}[1]{\State\algorithmicif\ #1\ \algorithmicthen}
\algnewcommand{\ElseIIf}[1]{\State \textbf{else} \algorithmicif\ #1\ \algorithmicthen}
\algnewcommand{\ITE}[3]{\State\algorithmicif\ #1\ \algorithmicthen\ #2\ \algorithmicelse\ #3\ \algorithmicend\ \algorithmicif}
\algnewcommand{\EndIIf}{\unskip\ \algorithmicend\ \algorithmicif}
\algnewcommand{\IFor}[2]{\State\algorithmicfor\ #1\ \algorithmicdo \ #2\ \algorithmicend\ \algorithmicfor}

\usetikzlibrary{calc} 
\usetikzlibrary{backgrounds}

\usepackage{mathtools}

\newcommand{\qbin}[2]{\begin{bmatrix}{#1}\\ {#2}\end{bmatrix}_q}

\renewcommand{\innerproduct}[2]{\left\langle #1\mid#2 \right\rangle}
\newcommand{\ctimes}{\cdot}

\theoremstyle{plain}
\newtheorem{theorem}{Theorem}
\newtheorem{corollary}[theorem]{Corollary}

\theoremstyle{definition}
\newtheorem{definition}{Definition}

\colorlet{cRed}{red!100!}

\begin{document}

\title{Automated search for highly contextual Kochen-Specker proofs}

\author{Axel Muller$^1$ and Metod Saniga$^2$}

\affil{$^1$Inria, ENS de Lyon, UCBL, LIP, 69342, Lyon Cedex 07, France\\
$^2$Astronomical Institute, Slovak Academy of Sciences SK-05960 Tatranská Lomnica, Slovak Republic}

\date{}

\maketitle

\begin{abstract}
Observable-based Kochen-Specker proofs are configurations of multi-qubit Pauli observables grouped into contexts whose products are plus or minus the indentity. Their robustness as state-independent contextuality tests can be measured by the tolerated error per context $\varepsilon = 2d/|H|$, where $d$ is the contextuality degree and $|H|$ the number of contexts. Since the degree depends only on an underlying abstract structure called the hypergram i.e. the pair formed by the context hypergraph and the anticommutation graph, the search for highly contextual proofs can be carried out on them instead, with no reference to qubits or to any particular Pauli labeling. We further exploit this by enumerating anticommutation graphs first, and then by associating to each graph $G$ the single hypergram carrying its entire hypergraph support $HS(G)$, so that exactly one candidate is examined per graph. Applied to the House of Graphs database and to censuses of vertex-transitive graphs on at most 24 vertices, this pipeline recovers the Peres-Mermin square, the doily and the Mermin pentagram, and yields configurations reaching $\varepsilon = 0.707$, against $0.424$ for the previous published record. The best configurations are predominantly those stemming from line graphs and unions of graphs; we explain the former by showing that every perfect matching of a graph is a context of its line graph, which exhibits the Peres-Mermin square and the doily as the first members of two infinite families. We close with finite geometric descriptions of the most striking configurations inside symplectic polar spaces, in terms of ovoids, hyperbolic quadrics and Fano planes.
\end{abstract}

\section{Introduction}
\label{seCintroduction}

In classical physical theories, the measured value of a physical quantity
does not depend on the other quantities measured alongside it, which form
its \emph{context}. Quantum theory violates this independence: the
Kochen-Specker theorem predicts experiments whose outcomes depend on which
compatible measurements are performed with them, a phenomenon called
\emph{quantum contextuality}~(see, e.\,g., \cite{bcgkl} for a review of the topic), now understood as a resource
for quantum computation. It has been formalised in several ways, through
binary~\cite{Cleve2014} and linear~\cite{CLS17} constraint systems,
through graphs~\cite{CSW14} and hypergraphs~\cite{AFL+15} whose vertices
are events and whose edges collect mutually exclusive ones, through
homotopy~\cite{OR20} and through Lie algebras~\cite{ACE+25}.

We work with \emph{observable-based} contextuality proofs, whose
measurements are multi-qubit Pauli observables. Such a proof is described
by a hypergraph whose vertices are the observables and whose hyperedges,
the \emph{contexts}, consist of mutually commuting observables whose product is
$\pm I^{\otimes n}$, where the sign distinguishes \emph{positive} from
\emph{negative} contexts. (These hypergraphs are not those
of~\cite{CSW14,AFL+15}, whose vertices are events rather than
observables.) These proofs are \emph{state-independent}, since they
witness contextuality for any initial state, and for a small number of
qubits they are \emph{testable} on existing quantum
computers~\cite{KZGKGCBR09,Hol21}. How contextual such a proof is, and
hence how much experimental noise the corresponding test tolerates, is
measured by its \emph{contextuality degree}~\cite{DHGMS22} i.e. the least
number of context signs that no assignment of measurement values to the
observables can reproduce. Finding proofs of
high degree is the objective of this article.

The proofs of this kind that have been searched for so far are mostly about
\emph{magic sets}~\cite{HS17}, whose observables lie in an even number of
contexts, whose negative contexts are odd in number, and in which two
observables share at most one context. These three conditions make
contextuality an immediate logical consequence, at the cost of confining
the search to a narrow region: the Peres-Mermin square~\cite{mermin,peres},
with its nine two-qubit observables and six ternary contexts, is the
smallest example. Configurations violating them are nonetheless
contextual, as multi-qubit doilies~\cite{MSGDH22} as well as bigger 
configurations~\cite{MSGDH24} show.

In previous work~\cite{mg25} the notion of \emph{hypergram} was introduced, the pair
formed by the context hypergraph and the graph of anticommutations of a
proof. It was shown that the contextuality degree is an invariant of this
structure alone, meaning that every Pauli labeling of a given hypergram, whatever its
number of qubits, has the same degree. 
The final piece of the puzzle comes with the idea of associating to an anticommutation graph the single hypergram carrying all of its admissible contexts, this was proposed by one of us in~\cite[Section 7.5]{muller-thesis}, where a preliminary and non-systematic search along these lines produced one configuration reaching an error per context $\varepsilon = 0.5$, identified here as $L(C(10,\{1,3\}))$ and beating the former published record from~\cite{TLC22} of $0.424$, although the latter one explicitely had its search restricted to magic sets. That material has not appeared outside the thesis and has not been peer-reviewed, whereas the present article develops the idea into a complete pipeline and takes it considerably further.

The scope of this article is therefore the following. We take quantum configurations as our object of study, and as our quantity to maximize the tolerated error per context $\varepsilon$, which is an experimentally measurable quantity proportional to the ratio between the contextuality degree and the total number of contexts. The search is organised around further restricting the search space from that of all hypergrams, to a mere anticommutation graph search, as a basis for determining which contexts are admissible, and then decide to keep all of them, so that exactly one hypergram is examined per graph. The problem then becomes the one of finding graphs whose contexts yield a high $\varepsilon$, and we attack it by running this pipeline over existing graph databases. Our goal is first to exhibit configurations whose ratio exceeds those of well-known configurations such as the Mermin pentagram and of the doily, and then to identify which structural features of the anticommutation graph accompany a high ratio. 

\Cref{seCdefs} fixes the notation and the background on the Pauli group, symplectic polar spaces, graphs and hypergrams, before introducing the new concept of hypergraph support; \Cref{secMethod} then justifies the choices made to simplify and narrow the research process, before \Cref{secResults} reports what the search returns, together with a theorem accounting for the recurrence of line graphs among the best configurations by identifying their contexts with the perfect matchings of the base graph, before commenting on the union of graphs yielding the best results. \Cref{secGeo} then steps back to describe the most striking of these configurations as geometric objects living in symplectic polar spaces, before \Cref{seCdiscussion} closes on the limits of the approach and the perspectives it opens.

\section{Definitions and notations}
\label{seCdefs}

\Cref{seCbackground,symplSec} recall the minimal background on the multi-qubit Pauli group and on its identification with the symplectic polar space $W_n$, along with the ovoids, Fano planes and hyperbolic quadrics that will be needed only in \Cref{secGeo}; \Cref{graphSec} does the same for the graph-theoretic vocabulary — perfect matchings, line graphs, transitivity, strong regularity — in which the outcome of the search is later phrased. \Cref{ccsSec} then recalls from~\cite{mg25} the abstract structure on which the whole article rests, the hypergram, makes explicit the notion of Pauli assignments and quantum configurations and introduces alongside it a new notion, the hypergraph support $HS(G)$ of an anticommutation graph. Finally~\Cref{contextualitySec} recalls the correspondence between the notions of contextuality degree~\cite{DHGMS22,MSGDH24} and noncontextual bound~\cite{Cab10,TLC22v2} before presenting the tolerated error per context $\varepsilon$ the single quantity maximised in the rest of the article.

\subsection{Multi-qubit Pauli group}
\label{seCbackground}

Let
\begin{equation*}
X = \left(
\begin{array}{rr}
0 & 1 \\
1 & 0 \\
\end{array}
\right),~~
Y = \left(
\begin{array}{rr}
0 & -\text{i} \\
\text{i} & 0 \\
\end{array}
\right)~~{\rm and}~~
Z = \left(
\begin{array}{rr}
1 & 0 \\
0 & -1 \\
\end{array}
\right)
\label{paulis}
\end{equation*}
be the Pauli matrices, $I$ the $2\times 2$ identity matrix, `$\otimes$' denote
the tensor product of matrices and $I^{\otimes n}$ denote the $n$-fold tensor
$I\otimes I \otimes\ldots\otimes I$ of the identity. A local \emph{$n$-qubit
(Pauli) observable} is a tensor product $A_1 \otimes A_2 \otimes \cdots \otimes
A_n$ with $A_i \in \{I,X,Y,Z\}$, usually denoted $A_1 A_2 \cdots A_n$, by
omitting the symbol $\otimes$ for the tensor product. Let '$\ctimes$' denote the
matrix product and $M^2$ denote $M \ctimes M$. It is easy to check that $X^2 =
Y^2 = Z^2 = I$, $X \ctimes Y = \text{i}Z = -Y \ctimes X$, $Y \ctimes Z =
\text{i}X = -Z \ctimes Y$, and $Z \ctimes X = \text{i}Y = -X \ctimes Z$. The
$n$-qubit observables with the multiplicative factors $\pm 1$ and $\pm
\text{i}$, called \emph{phase}, form the (\emph{generalized}) ($n$\emph{-qubit})
\emph{Pauli group} $\mathcal{P}^{\otimes n} =
(\{1,-1,\text{i},-\text{i}\}\times\{I,X,Y,Z\}^{\otimes n},\ctimes)$.

\subsection{Connection with symplectic polar spaces}
\label{symplSec}

Let $a$ and $b$ be two elements of the two-element field $\mathbb{F}_2 =
\{0,1\}$. Their sum, denoted $a+b$, and their product, denoted $ab$,
respectively correspond to the logical operations of exclusive disjunction and
conjunction, when $0$ encodes ``false'' and $1$ encodes ``true''.

The $2n$-dimensional vector space $\mathbb{F}_2^{2n}$ over $\mathbb{F}_2$ has
vector subspaces for each dimension $0 \leq k \leq 2n$. A subspace is
\emph{totally isotropic} if any two vectors $x$ and $y$ in it are mutually
orthogonal ($\innerproduct{x}{y} = 0$), for the symplectic form
$\innerproduct{.}{.}$ defined by
\begin{equation} 
\innerproduct{x}{y} = x_1y_2 + x_2y_1 + x_3y_4 + x_4y_3 + \dots 
              + x_{2n-1} y_{2n} + x_{2n} y_{2n-1}.
\label{symplf}
\end{equation}

The totally isotropic subspaces of $\mathbb{F}_2^{2n}$, without their zero
vector, form the \emph{symplectic (polar) space} $\mathcal{W}(2n-1,2)$
 of projective dimension $2n-1$. This name, in which $2$
is the order of the field $\mathbb{F}_2$, is hereafter shortened as $W_n$. In
other words, a (totally isotropic) subspace of $W_n$ of (projective) dimension
$k$, with $1 \leq k \leq n-1$, is a totally isotropic vector subspace of
$\mathbb{F}_2^{2n}$ of dimension $k+1$ without its $0$.
The symplectic polar space of rank two, $W_2$, is often dubbed as the {\it doily}.

\begin{definition}[Fano plane]\label{fanoDef}
The \emph{Fano plane} is the projective plane $\mathrm{PG}(2,2)$ over
$\mathbb{F}_2$, it has seven points and seven lines,
with three points on each line, three lines through each point, and a
unique line through any two distinct points.
\end{definition}

The $4^n-1$ phase-free $n$-qubit observables $A_1 \cdots A_j \cdots A_n$ in
$\mathcal{P}^{\otimes n}$ other than the identity $I^{\otimes n}$ are
bijectively identified with the $4^n-1$ vectors
$(x_1,x_2,\ldots,x_{2j-1},x_{2j},\ldots,x_{2n-1},x_{2n})$ which are the points
of $W_n$, by the extension $\psi:
\{I,X,Y,Z\}^{\otimes n}~\rightarrow~\mathbb{F}_2^{2n}$ of the encoding bijection
$\psi : \{I,X,Y,Z\} \rightarrow \mathbb{F}_2^{2}$ defined by
\begin{equation}
\psi(I) = (0,0),~\psi(X) = (0,1),~\psi(Y) = (1,1) {\rm~and}~\psi(Z) = (1,0).
\label{paulipts}
\end{equation}
This extension is defined by $\psi(A_1 \cdots A_j \cdots A_n) =
(x_1,x_2,\ldots,x_{2j-1},x_{2j},\ldots,x_{2n-1},x_{2n})$ with $\psi(A_j) =
(x_{2j-1},x_{2j})$ for $1~{\leq}~j~{\leq}~n$.

With the symplectic form defined by~(\ref{symplf}), two commuting observables
are represented by two orthogonal vectors.

\begin{definition}[Ovoid]\label{ovoidDef}
Let us call a maximal set of mutually anticommuting observables of $W_n$ an ovoid; each ovoid of $W_n$ has $2n+1$ elements. For $n=3$, an ovoid is also known as a Conwell heptad (see, e.\.g,~\cite{SdHG21}). 
\end{definition}

It is known that  for $W_n$ the number of its $k$-dimensional subspaces is
given by 
\begin{equation}
  \qbin{n}{k+1}  \prod_{i=1}^{k+1} (q^{n+1-i} +1),
  \label{sub-sympl}
\end{equation}
where 
\begin{equation}
\qbin{n}{k} = \prod_{i=1}^{k} \frac{q^{n-k+i} -1}{q^i -1} = \frac{(q^n-1) \dots (q^{n-k+1} -1)}{(q^k-1) \dots (q-1)}
\label{gauss}
\end{equation}
is the Gaussian (binomial) coefficient.

A {\it hyperbolic} quadric of $W_n$, $\mathcal{Q}^{+}(2n - 1, 2)$, for $n \geq 1$, is defined by the standard canonical equation:

   \[
   x_1x_{n+1}+x_2x_{n+2}+\cdots+x_{n}x_{2n}=0.
   \]
\noindent
   Each $\mathcal{Q}^{+}(2n - 1, 2)$ is endowed with $(2^{n-1} + 1)(2^{n} -1)$ points and there are $(2^{n-1} + 1)(2^{n} -1) + 1$ copies of them in $W_n$.
	An $n$-qubit observable $\mathcal{O}$ is either symmetric ($\mathcal{O}^{{\rm T}} = \mathcal{O}$), or skew-symmetric ($\mathcal{O}^{{\rm T}} = - \mathcal{O}$).
	Given a symmetric observable $\mathcal{O}$,  the set of symmetric observables commuting with $\mathcal{O}$ together with the set of skew-symmetric observables not
commuting with $\mathcal{O}$ lie on a certain $\mathcal{Q}^{+}(2n - 1, 2)$;  we  will call this associated observable the {\it index} of the quadric.

\subsection{Graph-related definitions}
\label{graphSec}
\begin{definition}[Perfect Matching]
Let \( G = (V, E) \) be a simple undirected graph. 
A \emph{perfect matching} in \( G \) is a subset \( M \subseteq E \) such that every vertex \( v \in V \) is incident to exactly one edge in \( M \). 
Equivalently, \( M \) is a matching that covers all vertices of \( G \).
\end{definition}

\begin{definition}[Line graph]\label{lineGraphDef}
The \emph{line graph} $L(G)$ of a simple graph $G = (V,E)$ is the graph
whose vertices are the edges in $E$, two of them being adjacent when
they share an endpoint.
\end{definition}

\begin{definition}[Rook's graph]\label{rookDef}
The \emph{$m \times m'$ rook's graph} is the Cartesian product
$K_m \square K_{m'}$: its vertices are the cells of an $m \times m'$ grid,
two of them being adjacent when they lie in the same row or in the same
column.
\end{definition}

It is worth mentioning that $K_m \square K_{m'} = L(K_{m,m'})$

\begin{definition}[Vertex- and edge-transitive graphs]\label{transitiveDef}
Let $(V,E)$ be a simple graph and $\mathrm{Aut}(G)$ its automorphism
group, that is, the group of permutations of $V$ preserving adjacency.
The graph is \emph{vertex-transitive} if $\mathrm{Aut}(G)$ acts
transitively on $V$, that is, if for any two vertices $u$ and $v$ there
is an automorphism sending $u$ to $v$. It is \emph{edge-transitive} if
$\mathrm{Aut}(G)$ acts transitively on $E$, that is, if any edge can be
mapped to any other edge by an automorphism.
\end{definition}

\begin{definition}[Strongly regular graph]
A graph is \emph{strongly regular} with
parameters $(\nu,\kappa,\lambda,\mu)$ if it has $\nu$ vertices, is
$\kappa$-regular, and any two vertices have exactly $\lambda$ common
neighbours when adjacent and exactly $\mu$ when non-adjacent.
\end{definition}

\begin{definition}[Cayley graph]
A \emph{Cayley graph} on a group $\Lambda$ has the elements of $\Lambda$
as vertices, two of them being adjacent when they differ by an element
of a fixed subset $S \subseteq \Lambda$ avoiding the identity and closed
under inverses.
\end{definition}

\begin{definition}[Circulant graph]
A \emph{circulant graph} $C(\nu,l)$ has $\nu$ vertices,
labelled by the residues modulo $\nu$, in which vertex $i$ is joined to
$i+j$ and to $i-j$ for every $j$ in a given list $l$ of
offsets.
\end{definition}

\subsection{Hypergrams and assignments}\label{ccsSec}

A \emph{hypergraph} $\mathcal{H} = (V,H)$ consists of a finite set $V$ of vertices $v$ and a set $H$ of hyperedges $h$ with $h \subseteq V$. Following~\cite{GR01,MWcirculant}, a simple graph is \emph{reduced} (or twin-free) if it has no isolated vertices and no distinct vertices with identical neighborhoods. Let $\text{cplt}(H)$ denote the complement graph on $V$, whose edges connect distinct pairs of vertices that do not share any hyperedge in $H$.

\begin{definition}
    
\label{hypergramDef}A \emph{hypergram}~\cite{mg25} is a triple $(V,H,G)$ where $(V,H)$ is a hypergraph without isolated vertices or empty hyperedges, and $(V,G)$ is a simple reduced \emph{anticommutation graph} satisfying $G \subseteq \text{cplt}(H)$. Vertices $i$ and $j$ \emph{commute} if $\{i,j\} \not\in G$.
\end{definition}

When $G = \text{cplt}(H)$, the hypergram $(V,H,G)$ is identified directly with its hypergraph $(V,H)$, as studied in settings like magic sets~\cite{TLC22}.

A quantum configuration~\cite{MSGDH24,mg25} is a pair $(O,C)$ in which $O$ is a non-empty finite set of $2^n$-dimensional Hermitian observables and $C$ a finite set of non-empty subsets of $O$, called contexts, such that
(i) $a^2=I^{\otimes n}$ for every $a\in O$, so that each observable has eigenvalues in $\{-1,+1\}$;
(ii) any two observables lying in a common context commute;
(iii) the product of the observables of each context is $I^{\otimes n}$ (positive context) or 
$-I^{\otimes n}$ (negative context).

An \emph{$n$-qubit Pauli assignment} of a hypergram $(V,H,G)$ is an injective map $\alpha : V \rightarrow \{I,X,Y,Z\}^{\otimes n} \setminus \{I^{\otimes n}\}$ assigning distinct non-identity observables such that:

\begin{enumerate}[(i)]
    \item \emph{Commutation condition}: $\alpha(v_1)$ and $\alpha(v_2)$ anticommute if and only if $\{v_1, v_2\} \in G$;
    \item \emph{Product condition}: $\prod_{v \in h} \alpha(v) = \pm I^{\otimes n}$ for every hyperedge $h \in H$.
\end{enumerate}

This allows us to define the quantum configuration associated to a Pauli assignment $\alpha$ of a hypergram $(V, H, G)$
as the pair $(O, C)$ such that $O = \alpha(V)$ and $C = \alpha(H)$. 
The notion of Pauli assignment connects to quantum satisfying assignments in binary constraint systems~\cite{Cleve2014}, operator solutions to binary linear systems~\cite{CLS17}, quantum realizations of signed arrangements~\cite{Arkhipov2012,OR20}, measurement-outcome representations~\cite{CSW14,AFL+15}, and Pauli-based assignments~\cite{TLC22v2}.

\begin{figure}[h!]
    \centering
    \begin{subfigure}[b]{0.49\textwidth}
        \centering
          \begin{tikzpicture}[scale=2, 
      every node/.style={scale=0.9,minimum size=1.0cm,circle,draw=black},
      ]
      \node (o1) at (0,2) {$v_1$};
      \node (o2) at (1,2) {$v_2$};
      \node (o3) at (2,2) {$v_3$};
      \node (o4) at (0,1) {$v_4$};
      \node (o5) at (1,1) {$v_5$};
      \node (o6) at (2,1) {$v_6$};
      \node (o7) at (0,0) {$v_7$};
      \node (o8) at (1,0) {$v_8$};
      \node (o9) at (2,0) {$v_9$};
      \draw[draw=gray] (o1) -- (o5);
      \draw[draw=gray] (o1) -- (o6);
      \draw[draw=gray] (o1) -- (o8);
      \draw[draw=gray] (o2) -- (o4);
      \draw[draw=gray] (o2) -- (o6);
      \draw[draw=gray] (o2) -- (o7);
      \draw[draw=gray] (o2) -- (o9);
      \draw[draw=gray] (o3) -- (o4);
      \draw[draw=gray] (o3) -- (o5);
      \draw[draw=gray] (o3) -- (o8);
      \draw[draw=gray] (o4) -- (o8);
      \draw[draw=gray] (o4) -- (o9);
      \draw[draw=gray] (o5) -- (o7);
      \draw[draw=gray] (o5) -- (o9);
      \draw[draw=gray] (o6) -- (o7);
      \draw[draw=gray] (o6) -- (o8);

      \draw[draw=gray,bend left=17]  (o1) to (o9);
      \draw[draw=gray,bend right=17] (o3) to (o7);
    \end{tikzpicture}
        \caption{Anticommutation graph $G_1$ of The Peres-Mermin square, the rook graph $K_3 \square K_3$, the grid structure of the edges is not visible in this layout, to better explain the Peres-Mermin assignment.}
        \label{fig:sub1}
    \end{subfigure}
    \hfill
    \begin{subfigure}[b]{0.49\textwidth}
        \centering
\begin{tikzpicture}[scale=2, 
every node/.style={scale=0.9,draw=black,circle,minimum size=10mm,}
]
\node (XY) at (0,0) {$v_7$};
\node (YX) at (1,0) {$v_8$};
\node (ZZ) at (2,0) {$v_9$};
\node (IY) at (0,1) {$v_4$};
\node (YI) at (1,1) {$v_5$};
\node (YY) at (2,1) {$v_6$};
\node (XI) at (0,2) {$v_1$};
\node (IX) at (1,2) {$v_2$};
\node (XX) at (2,2) {$v_3$};
\draw (XY) -- (IY) -- (XI) ;
\draw (YX) -- (YI) -- (IX) ;
\draw (ZZ) -- (YY) -- (XX) ;
\draw (XY) -- (YX) -- (ZZ) ;
\draw (IY) -- (YI) -- (YY) ;
\draw (XI) -- (IX) -- (XX) ;
\end{tikzpicture}
        \caption{Context hypergraph of the whole hypergraph support $HS(G_1)$.\\~\\~}
        \label{fig:sub2}
    \end{subfigure}
    \\
    \begin{subfigure}[b]{0.3\textwidth}
        \centering
\begin{tikzpicture}[scale=2, 
every node/.style={scale=1.0,draw=black,circle}
]
\node (XY) at (0,0) {$XY$};
\node (YX) at (1,0) {$YX$};
\node (ZZ) at (2,0) {$ZZ$};
\node (IY) at (0,1) {$IY$};
\node (YI) at (1,1) {$YI$};
\node (YY) at (2,1) {$YY$};
\node (XI) at (0,2) {$XI$};
\node (IX) at (1,2) {$IX$};
\node (XX) at (2,2) {$XX$};
\draw (XY) -- (IY) -- (XI) ;
\draw (YX) -- (YI) -- (IX) ;
\draw [double distance = 2pt,red] (ZZ) -- (YY) -- (XX) ;
\draw (XY) -- (YX) -- (ZZ) ;
\draw (IY) -- (YI) -- (YY) ;
\draw (XI) -- (IX) -- (XX) ;
\end{tikzpicture}
        \caption{Quantum assignment $\alpha_1$ generated by $(G_1,HS(G_1))$. The context illustrated by the double red line is negative.}
        \label{fig:sub3}
    \end{subfigure}
    \caption{Building process from the finite simple graph $G_1$ to the quantum assignment $\alpha_1$.\label{fig:main}}
\end{figure}

A hypergram $(V,H,G)$ is \emph{(Pauli-)assignable} if it admits at least one Pauli assignment. Following algebraic graph theory~\cite{GR01}, we represent the context hypergraph $(V,H)$ by its incidence matrix $H \in \mathbb{F}_2^{\vert{}H\vert{}\times\vert{}V\vert{}}$ ($H_{k,v}=1$ if $v \in h_k$) and the anticommutation graph $(V,G)$ by its adjacency matrix $G \in \mathbb{F}_2^{\vert{}V\vert{}\times\vert{}V\vert{}}$ ($G_{i,j}=1$ if $\{i,j\} \in G$).

\begin{theorem}[Assignability Conditions]\cite[Theorem 9]{mg25}\label{vecCnsQAprop}
    For any hypergram $(V,H,G)$, the following conditions are equivalent:
    \begin{enumerate}[(i)]\item $(V,H,G)$ admits a Pauli assignment;\item \emph{Matricial condition:} $H \ctimes G = 0$ over $\mathbb{F}_2$, i.,e., every column of $G$ belongs to $\text{ker}(H)$;\item \emph{Graphical condition:} For each vertex $v \in V$, every hyperedge $h \in H$ contains an even number of vertices adjacent to $v$ in $(V,G)$.
    \end{enumerate}
\end{theorem}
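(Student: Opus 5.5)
The equivalence (ii)$\Leftrightarrow$(iii) is a matter of reading the matrix product entrywise, so the real content lies in (i)$\Leftrightarrow$(ii). Since $G$ is symmetric, $(H\ctimes G)_{k,v}=\sum_{u\in h_k} G_{u,v}$ over $\mathbb{F}_2$, which is the parity of the number of neighbours of $v$ in $(V,G)$ that lie in $h_k$. Hence $H\ctimes G=0$ holds exactly when every such count is even, which is (iii). Equivalently, each column $G_{\cdot,v}$ lies in $\ker(H)$, so (ii) and (iii) are the same statement.

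For (i)$\Rightarrow$(iii), fix a Pauli assignment $\alpha$, a hyperedge $h$ and a vertex $v$. By the product condition, $P=\prod_{u\in h}\alpha(u)=\pm I^{\otimes n}$, so $P$ commutes with $\alpha(v)$. Moving $\alpha(v)$ past each factor $\alpha(u)$ contributes a sign $-1$ precisely when $\{u,v\}\in G$, by the commutation condition. If $v\in h$, the term $u=v$ contributes no sign, since $v$ has no loop. We therefore get $\alpha(v)P=(-1)^{|N_G(v)\cap h|}P\alpha(v)$, and since $P=\pm I^{\otimes n}$ this forces $|N_G(v)\cap h|$ to be even. In the symplectic language of \Cref{symplSec} the same argument reads: $\sum_{u\in h}\psi(\alpha(u))=0$, and pairing this sum with $\psi(\alpha(v))$ gives the parity.

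For (ii)$\Rightarrow$(i), which is the constructive direction and the main obstacle, I would build vectors in $\mathbb{F}_2^{2n}$ directly. Index the vertices $v_1,\dots,v_m$ and take $n=m$ qubits. Write $G=U+U^{T}$ with $U$ strictly upper triangular. For each vertex $v_i$, set $x_i=(e_i, U_{i,\cdot})$, where the "$Z$-part" is the basis vector $e_i$ and the "$X$-part" is the row $U_{i,\cdot}$; the coordinates are interleaved as in $\psi$. The symplectic form then gives $\innerproduct{x_i}{x_j}=U_{j,i}+U_{i,j}=G_{i,j}$, so the commutation condition holds. The vectors are pairwise distinct and nonzero because their $e_i$ parts differ, so the induced map $\alpha=\psi^{-1}\circ x$ is injective and avoids $I^{\otimes n}$.

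The difficulty is the product condition, which requires $\sum_{u\in h}x_u=0$. This fails for the naive vectors above, since the $e_i$ parts are linearly independent. I would therefore quotient: let $K=\{\sum_{u\in h}e_u : h\in H\}$ and work in the space $\mathbb{F}_2^{V}/\mathrm{span}(K)$ combined with the symplectic structure. More concretely, I would seek vectors $x_u$ satisfying two requirements: the Gram matrix of the $x_u$ under the symplectic form equals $G$, and the $x_u$ satisfy all linear relations given by the rows of $H$.

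The key lemma would be the following. A symmetric, zero-diagonal matrix $G$ on $\mathbb{F}_2^V$ descends to an alternating form on $\mathbb{F}_2^V/\mathrm{rowspace}(H)$ if and only if the rowspace of $H$ lies in the radical of $G$, that is, $H\ctimes G=0$. This is exactly condition (ii). An alternating form on an $\mathbb{F}_2$-space can then be embedded isometrically into a symplectic space $\mathbb{F}_2^{2n}$ for $n$ large enough, using a symplectic basis of a complement of its radical and adding hyperbolic pairs to absorb the radical. Taking $x_u$ to be the image of the class of $e_u$ gives the required vectors.

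The remaining issues are injectivity and nonvanishing of the images, which can fail if two vertices become equal, or a vertex becomes zero, modulo $\mathrm{rowspace}(H)$. Twin-freeness of $G$ prevents only part of this. To handle it, I would enlarge the embedding: add extra hyperbolic coordinates that separate classes while staying orthogonal to everything else. Signs can be ignored throughout, because the product of commuting Paulis whose symplectic vectors sum to zero is automatically $\pm I^{\otimes n}$.
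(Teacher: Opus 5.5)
Your arguments for (ii)$\Leftrightarrow$(iii) and (i)$\Rightarrow$(iii), and your descent-plus-isometric-embedding lemma, are fine. The last step of (ii)$\Rightarrow$(i), however, is not closed, and the remedy you propose cannot work. Any family $(x_u)_{u\in V}$ satisfying the product condition defines, via $e_u\mapsto x_u$, a linear map $\mathbb{F}_2^V\to\mathbb{F}_2^{2n}$ that vanishes on every row of $H$, hence on $\mathrm{rowspace}(H)$. So it factors through $Q=\mathbb{F}_2^V/\mathrm{rowspace}(H)$, whatever extra hyperbolic coordinates you append. Suppose some $e_u+e_v$ with $u\neq v$ (or some $e_u$) lay in $\mathrm{rowspace}(H)$. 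Then every candidate assignment would have $x_u=x_v$ (or $x_u=0$), so (i) would simply fail. Enlarging the embedding cannot separate vertices whose classes coincide; the only way out is to show that such coincidences never occur.

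The missing idea is that reducedness rules them out completely, not just in part. By (ii) and the symmetry of $G$, every row of $H$ lies in $\ker G$, so $\mathrm{rowspace}(H)\subseteq\ker G$. Then $e_u+e_v\in\mathrm{rowspace}(H)$ forces $Ge_u=Ge_v$, i.e.\ $u$ and $v$ have identical neighbourhoods. Likewise $e_u\in\mathrm{rowspace}(H)$ forces $Ge_u=0$, i.e.\ $u$ is isolated. Both are forbidden because $G$ is reduced, so your injective embedding of $Q$ already gives pairwise distinct nonzero vectors.

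The same inclusion drives the construction the paper relies on, symplectic reduction of $G$ as in \cite[Theorem 8.10.1]{GR01}. There one quotients by the whole radical $\ker G$ rather than by $\mathrm{rowspace}(H)$, writing $G$ as the symplectic Gram matrix of the columns of a $\mathrm{rk}(G)\times|V|$ matrix $M$ with $\ker M=\ker G$. The product condition is then automatic, injectivity and non-identity are exactly reducedness, and no extra hyperbolic pairs are needed. This gives the minimal $n=\mathrm{rk}(G)/2$, whereas your quotient uses $\mathrm{rk}(G)/2+\dim(\ker G/\mathrm{rowspace}(H))$ qubits. Your opening construction with $U$ strictly upper triangular is a dead end and can be dropped.
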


\label{labelingSec}When $H \ctimes G = 0$, a Pauli assignment $\alpha : V \rightarrow \{I,X,Y,Z\}^{\otimes n} \setminus \{I^{\otimes n}\}$ can be efficiently constructed in $O(\vert{}V\vert{}^3)$ time by applying standard symplectic reduction steps to $G$ (adapted from~\cite[Theorem 8.10.1]{GR01} and implemented in~\cite[Algorithm 1]{mg25}).The required number of qubits $n$ is directly dictated by the (always even) rank of the anticommutation matrix:

\begin{equation}
    n = \frac{\text{rk}(G)}{2}.
\end{equation}

\begin{definition}[Hypergraph support]\label{hsDef}
Let $G = (V,E)$ be a finite simple reduced graph, The hypergraph support $HS(G)$, is the set $\mathcal{H} = (V,H)$ containing all the contexts that can be formed from $G$ as the set of hyperedges $H$. In other words, it is the set of all independent sets of $G$ in which every vertex $u \in G$ is adjacent with an even number of vertices $v \in h$, for $h \in H$.
\end{definition}

\begin{corollary}
If $(V,H,G)$ is a valid hypergram, then $H \subseteq HS(G)$.
\end{corollary}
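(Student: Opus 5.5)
The plan is to unpack the definitions and show that every hyperedge $h \in H$ meets the two defining requirements of $HS(G)$ in \Cref{hsDef}: $h$ is an independent set of $G$, and every vertex $u \in V$ has an even number of $G$-neighbours in $h$. Membership in $HS(G)$ then follows immediately. Throughout, ``valid hypergram'' is read as an \emph{assignable} hypergram, i.e.\ one admitting a Pauli assignment, since contexts are only meaningful for realisable configurations.

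\emph{Step 1: independence.} By \Cref{hypergramDef}, $G \subseteq \text{cplt}(H)$. So every edge $\{i,j\}$ of $G$ joins two vertices that share no hyperedge of $H$. Now take two distinct $i,j \in h$. They share the hyperedge $h$, so $\{i,j\} \notin \text{cplt}(H)$, and hence $\{i,j\} \notin G$. Thus $h$ is an independent set of $G$. This matches the physical requirement that observables in a common context commute.

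\emph{Step 2: parity.} Since the hypergram admits a Pauli assignment, \Cref{vecCnsQAprop} applies, specifically the equivalence (i)$\Leftrightarrow$(iii). The graphical condition (iii) says that for every $v \in V$ and every $h \in H$, the number of vertices of $h$ adjacent to $v$ in $G$ is even. This is exactly the parity requirement of \Cref{hsDef}. Equivalently, in matrix terms, the row of $H$ indexed by $h$ is orthogonal to every column of $G$, because $H \ctimes G = 0$.

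Combining the two steps gives $h \in HS(G)$ for each $h \in H$, that is, $H \subseteq HS(G)$. The only delicate point is interpretive. If ``valid'' meant merely a hypergram in the sense of \Cref{hypergramDef} without assignability, the parity condition could fail, and the statement would be false. So the main step is invoking \Cref{vecCnsQAprop} under the assignability hypothesis. Independence, by contrast, comes for free from $G \subseteq \text{cplt}(H)$.
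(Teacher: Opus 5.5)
Your proof is correct and follows the same route the paper implicitly relies on (the paper states the corollary without proof): independence of each context comes from $G \subseteq \text{cplt}(H)$ in the definition of a hypergram, and the parity requirement of $HS(G)$ is the graphical condition (iii) of the Assignability Conditions theorem. You are also right to read ``valid'' as ``assignable'': without that hypothesis the parity condition can fail, and the inclusion is then false.
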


Note that a quantum configuration whose hypergram is of the form $(V,HS(G),G)$ is referred to in~\cite{mg25} as a first family hypergram. This family contains most quantum configurations in the literature, such as the Peres-Mermin square (see~\Cref{fig:main}), the Mermin pentagram and the doily.

\subsection{Contextuality degree and noncontextual bound}
\label{contextualitySec}

Let $(V,H,G)$ be a hypergram and $\alpha$ an $n$-qubit Pauli assignment of it. The \emph{Pauli sign function} of $\alpha$ is the map $\mathrm{sgn}_{\alpha} : H \rightarrow \{-1,1\}$ such that $\prod_{v \in h} \alpha(v) = \mathrm{sgn}_{\alpha}(h)\, I^{\otimes n}$ for every hyperedge $h \in H$. A
\emph{classical assignment} is a map $a : V \rightarrow \{-1,+1\}$, and its \emph{classical sign function} is the map $\mathrm{sgn}_{a} : H \rightarrow \{-1,1\}$ given by $\mathrm{sgn}_{a}(h) = \prod_{v \in h} a(v)$.
Sign functions play the same role as the second member of a binary constraint or linear system~\cite{Cleve2014,CLS17} and as the signature of a signed arrangement~\cite{Arkhipov2012}.

A Pauli assignment $\alpha$ is \emph{contextual} if there is no classical assignment $a$ such that $\text{sgn}_{\alpha} = \text{sgn}_{a}$. Quantitatively, following~\cite{DHGMS22,MSGDH24}, the \emph{contextuality degree} $d$ of $\alpha$ is the minimum Hamming distance between its sign function $\text{sgn}_{\alpha}$ and that of any classical assignment $a$:
\begin{equation}
d = \min_{a : V \rightarrow \{-1,+1\}} \left| \{ h \in H \mid \text{sgn}_{\alpha}(h) \neq \text{sgn}_{a}(h) \} \right|.
\label{cdegEq}
\end{equation}

Equivalently, $d$ measures the minimum number of context products on which any noncontextual classical model must disagree with the quantum assignment. 


Let $\alpha$ be a Pauli assignment of an assignable hypergram
$(V,H,G)$, splitting $H$ into its subset $H^{+}$ of positive contexts
and its subset $H^{-}$ of negative ones. Following~\cite{Cab10,TLC22},
any noncontextual hidden-variable model satisfies
\begin{equation}
\chi \;=\; \sum_{h \in H^{+}} \langle h \rangle
         \;-\; \sum_{h \in H^{-}} \langle h \rangle
   \;\leq\; b_{\alpha}(H),
\label{ncIneq}
\end{equation}
where $\langle h \rangle$ is the expectation value of the product of the
observables of the context $h$. This value has already been measured experimentally
in quantum computers~\cite{KH25}.

\begin{definition}[Noncontextual bound]\label{ncBoundDef}
The \emph{noncontextual bound} of the Pauli assignment $\alpha$ is
\begin{equation}
b_{\alpha}(H) \;=\;
  \max_{a : V \to \{-1,+1\}}
  \sum_{h \in H} \mathrm{sgn}_{\alpha}(h)\,\mathrm{sgn}_{a}(h).
\label{ncBoundEq}
\end{equation}
\end{definition}

Quantum mechanics predicts $\langle h \rangle = 1$ for every positive
context and $\langle h \rangle = -1$ for every negative one, whatever the
initial state, so the quantum value of~(\ref{ncIneq}) is
$\chi_{\mathrm{QM}} = |H|$, and the assignment is contextual exactly when
$b_{\alpha}(H) < |H|$. Each hyperedge contributes $+1$
to~(\ref{ncBoundEq}) when it is satisfied and $-1$ when it is not, so
$b_{\alpha}(H) = |H| - 2d$ by~(\ref{cdegEq}), which
is~\cite[Theorem 2]{TLC22}.

\begin{definition}[Tolerated error per context]\label{epsDef}
The \emph{tolerated error per context}~\cite[Eq.~(C3)]{TLC22v2} of an
assignable hypergram of contextuality degree $d$ is
\begin{equation}
\varepsilon \;=\; \frac{\chi_{\mathrm{QM}} - b_{\alpha}(H)}{|H|}
             \;=\; \frac{2d}{|H|}.
\label{epsEq}
\end{equation}
\end{definition}

If the measured value of each context deviates from its ideal quantum
value by at most $\varepsilon$, the inequality~(\ref{ncIneq}) is still
violated; $\varepsilon$ therefore measures the robustness of the
associated state-independent contextuality test to experimental
imperfection, and is the quantity we maximise.

\begin{theorem}~\cite{mg25}~\label{thmSameDeg}
Let $(V,H,G)$ be a hypergram. Then all Pauli assignments of $(V,H,G)$ have the same
contextuality degree, noncontextual bound and tolerated error per context.
\end{theorem}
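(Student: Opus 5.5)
The plan is to show first that, for two Pauli assignments $\alpha$ and $\beta$ of the same hypergram $(V,H,G)$, possibly on different numbers of qubits, the two sign functions differ by a \emph{classical} sign function. That is, we aim to find $c : V \to \{-1,+1\}$ with $\mathrm{sgn}_\beta(h) = \mathrm{sgn}_\alpha(h)\,\mathrm{sgn}_c(h)$ for every $h \in H$. Granting this, the map $a \mapsto a\cdot c$ is a bijection of classical assignments. It turns the set of contexts on which $\mathrm{sgn}_a$ disagrees with $\mathrm{sgn}_\alpha$ into the set on which $\mathrm{sgn}_{ac}$ disagrees with $\mathrm{sgn}_\beta$. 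The minimum in~(\ref{cdegEq}) is therefore the same for $\alpha$ and $\beta$, so $d$ is invariant. The invariance of $b_\alpha(H)=|H|-2d$ and of $\varepsilon = 2d/|H|$ then follows immediately, since $|H|$ depends only on the hypergram.

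To obtain $c$, I would pass to the symplectic picture of \Cref{symplSec}. Write $x_v = \psi(\alpha(v)) \in \mathbb{F}_2^{2n}$. The product condition says $\sum_{v\in h} x_v = 0$ for each context $h$. The sign of $\prod_{v\in h}\alpha(v)$ is then determined by the vectors $x_v$ and the chosen ordering. For commuting Pauli operators it can be computed as a quadratic-type function: multiplying Paulis step by step, each step contributes a phase $\mathrm{i}^{\beta(x,y)}$ given by a bilinear-plus-correction term. So the sign is $(-1)^{q(h)}$, where $q$ depends on the Gram data $\innerproduct{x_u}{x_v}$ of the vectors in $h$ and on their individual $Y$-counts. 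Since the Gram data are fixed by $G$, the task is to isolate what is not determined by $G$.

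The cleanest route is to use that any two assignments realizing the same Gram matrix are related on the span of the vectors. Concretely, there is a linear map $\phi$ between $\mathrm{span}\{x_v\}$ and $\mathrm{span}\{y_v\}$ with $\phi(x_v)=y_v$ that preserves the symplectic form. This holds because the kernels agree: a linear relation among the $x_v$ holds iff it holds among the $y_v$. That in turn follows because the relation space is determined by $G$ via nondegeneracy after quotienting by the radical; this step needs care, and may require reducing to the reducedness (twin-free) hypothesis and the rank formula $n=\mathrm{rk}(G)/2$. Extend $\phi$ to a symplectic isometry, possibly after embedding both into a common larger $W_N$ by tensoring with identities. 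Such an isometry is realized by a Clifford unitary $U$ with $U\alpha(v)U^\dagger = \pm\beta(v)$, and conjugation preserves products. Hence $\prod_{v \in h}\beta(v) = \prod_{v\in h} c(v)\, U\!\left(\prod_{v \in h}\alpha(v)\right)U^\dagger$, which gives $\mathrm{sgn}_\beta(h)=\mathrm{sgn}_\alpha(h)\prod_{v\in h}c(v)$ as required.

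I expect the main obstacle to be the middle step: proving that the linear dependencies among the $x_v$ are the same as those among the $y_v$. These are dependencies beyond the context relations, because a vector in the radical of the span may be expressible in different ways. I would first try to show that the radical of $\mathrm{span}\{x_v\}$ is trivial whenever $n = \mathrm{rk}(G)/2$. If the assignment uses more qubits than this, I would project away the extra isotropic directions, arguing that this changes only phases of the form $\prod c(v)$. An alternative fallback avoids unitaries altogether: compute $\mathrm{sgn}_\alpha(h)\mathrm{sgn}_\beta(h)$ directly from the phase formula, and show that the discrepancy is additive over the vertices of $h$.
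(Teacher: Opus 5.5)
Your reduction is sound: if $\mathrm{sgn}_\beta=\mathrm{sgn}_\alpha\,\mathrm{sgn}_c$ for some classical $c$, then $a\mapsto ac$ matches the disagreement sets, so $d$ is invariant, and $b=|H|-2d$ and $\varepsilon=2d/|H|$ follow. The paper gives no proof of its own here and only cites \cite{mg25}, so your argument has to stand on its own.

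\textbf{The gap is in producing $c$.} Your key claim, that a linear relation holds among the $x_v$ iff it holds among the $y_v$, is false for assignments using more than $\mathrm{rk}(G)/2$ qubits, and the theorem explicitly covers those. Write $S=\mathrm{span}\{x_v\}$. A relation $s$ only forces $Gs=0$. Conversely, $Gs=0$ only gives $\sum_v s_vx_v\in\mathrm{rad}(S)$. Concretely, suppose $s$ is a relation of $\alpha$ outside the row space of $H$. Pick $c\in\ker H$ with $c\cdot s=1$ and set $\beta(v)=\alpha(v)\otimes Z^{c_v}$. This is again a Pauli assignment of $(V,H,G)$, since the extra $Z$'s commute with everything and multiply to $I$ on every context, but $s$ is no longer a relation.

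For instance, take the Peres--Mermin square with only its three row contexts, the usual labelling $v_1=XI$, $v_4=IY$, $v_7=XY,\dots$, and $c$ the indicator of $\{v_1,v_2\}$. The column $\{v_1,v_4,v_7\}$ sums to $0$ for $\alpha$ but not for $\beta$. There is then no linear map $x_v\mapsto y_v$ at all, and the map $y_v\mapsto x_v$ is not injective. So no symplectic isometry exists, and no Clifford $U$ with $U\alpha(v)U^\dagger=\pm\beta(v)$ exists either, even after padding with identities. Your single sentence about ``projecting away the extra isotropic directions'' is precisely the missing lemma, and it is only asserted.

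\textbf{What you do have is the minimal case.} If $n=\mathrm{rk}(G)/2$, then $\mathrm{rk}(G)=\dim S-\dim\mathrm{rad}(S)\le 2n$ forces $S=\mathbb{F}_2^{2n}$ with trivial radical. The relations are then exactly $\ker G$, and your Clifford argument works between two minimal assignments.

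\textbf{Finishing along your route.} You must show every assignment differs from a minimal one by a coboundary. Split $S=T\oplus\mathrm{rad}(S)$ with $T$ nondegenerate. Move $T$ by a Clifford onto the first $\mathrm{rk}(G)/2$ qubits, so that $U\alpha(v)U^\dagger=\pm P_v\otimes Q_v$ with the $Q_v$ pairwise commuting. Then absorb both the signs and $\prod_{v\in h}Q_v$ into $c$, using a common eigenvector of the $Q_v$. Reducedness makes $v\mapsto P_v$ injective and non-identity.

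\textbf{A much shorter alternative.} This route needs neither equal qubit numbers, nor Witt's theorem, nor Clifford unitaries. Set $\gamma(v)=\alpha(v)\otimes\beta(v)$. These are Hermitian involutions that pairwise commute, because $\alpha(u),\alpha(v)$ anticommute exactly when $\beta(u),\beta(v)$ do, and the commutation sign of a tensor product is the product of the signs. Moreover $\prod_{v\in h}\gamma(v)=\mathrm{sgn}_\alpha(h)\,\mathrm{sgn}_\beta(h)\,I$. The eigenvalues $c(v)\in\{-1,+1\}$ on a common eigenvector therefore satisfy $\prod_{v\in h}c(v)=\mathrm{sgn}_\alpha(h)\,\mathrm{sgn}_\beta(h)$, which is exactly the $c$ your reduction needs.
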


\section{Methodology}
\label{secMethod}
\subsection{Shrinking the search space}
\label{seCshrinking}

Since~\Cref{thmSameDeg} states that all Pauli assignments of a hypergram share the same contextuality degree, the search for contextuality proofs can be
conducted at the level of hypergrams, without reference to any number of
qubits or to any particular labeling. This leaves the question of how to
enumerate hypergrams $(V,H,G)$, and the order in which their two
components are chosen turns out to matter.

Enumerating context hypergraphs first is impractical. Beyond the growth
of the number of hypergraphs on $|V|$ vertices, each candidate $(V,H)$
must then be completed by every anticommutation graph $G$ contained in
$\mathrm{cplt}(H)$, and only afterwards can the resulting triple be
tested for assignability. The assignability condition, which is the
constraint doing the actual pruning, is thus applied last, on a space
that has already been expanded twice.

We therefore reverse the order and enumerate anticommutation graphs
first. Given a simple reduced graph $(V,G)$, the assignability condition
of~\Cref{vecCnsQAprop} determines which hyperedges may accompany it,
namely those of the hypergraph support $HS(G)$ of \Cref{hsDef}, so that
the admissible context hypergraphs are exactly the subsets of $HS(G)$. 
The constraint is now applied first, and to a single object.

One obstacle remains. $HS(G)$ still has up to
$2^{|HS(G)|} - 1$ non-empty subsets, so enumerating all of them is out of the question.
We settle the matter by taking it entirely, $(V,HS(G),G)$, which we believe has 
the highest chance of providing high $\varepsilon$ values.
Only one hypergram per graph therefore needs to be considered. The
Peres-Mermin square and the Mermin pentagram illustrate the extreme case
of this fact, since removing any one of their contexts leaves a
structure that is no longer contextual. 
As noted in the introduction, the idea of retaining every context compatible with a given
anticommutation graph was first mentioned in~\cite[Section 7.4.5]{muller-thesis} and
is implemented in the \textsc{Qontextium} software~\cite{qo25}.
We note that a technical limitation in the software prevents it from 
checking assignability of hypergrams if they need more than 15 qubits 
(due to the fact that each observable is stored as 32-bit words with 2 bits per qubit, with 2 extra bits for bit-shifting concerns.)

\subsection{Directing the search}
\label{dirsearch}

The number $g_\nu$ of unlabelled simple graphs on $\nu$ vertices grows
super-exponentially, asymptotically as
$g_\nu \sim 2^{\binom{\nu}{2}}/\nu!$~\cite{harary1973graphical}, so an
exhaustive enumeration is out of reach beyond very small sizes and the
search must be directed. Our first pass therefore uses the graphs
collected in the \url{houseofgraphs.org} database~\cite{hog54672}. Being a
curated collection, whose members were recorded because they are of
interest for one reason or another, it is in no sense a uniform sample;
what it offers instead is a set of graphs annotated with their
structural properties, which lets us ask not only which configurations
have a high $\varepsilon$ ratio but which graph properties accompany one.

We ran~\textsc{Qontextium}~\cite{qo25} on the 13864
 graphs of the database
with at most $30$ vertices, that are connected and reduced, retrieved on 26 August 2026. The search
recovered the Peres-Mermin square and the Mermin pentagram, which
validates the pipeline. \Cref{graphplot} shows the ratio of every
configuration found against its number of contexts.

Most graphs with the highest $\varepsilon$ are vertex-transitive, which
suggests the heuristic that a large automorphism group favours a high
ratio. Some caution is needed here, because \url{houseofgraphs.org}
over-represents highly symmetric graphs, so the correlation is in part an
artefact of the sample. It also appears that a high number of contexts does 
also help, probably because each context is an additional constraint to satisfy 
from the point of view of the contextuality degree.

We then made a second pass over the censuses of vertex-transitive
graphs~\cite{GD20}, of which the \url{houseofgraphs.org} database contains only a small portion,
covering 
all such graphs on at most 24 vertices. This
yielded the further configurations marked in \Cref{graphplot}, including
the ones with the highest $\varepsilon$ we have found, which
confirms that the heuristic is worth following rather than merely
descriptive of the first sample.

\section{Results}
\label{secResults}

\begin{figure}[htbp]
    \centering
    \begin{subfigure}[b]{0.99\textwidth}
        \centering
        \includegraphics[width=1.0\textwidth]{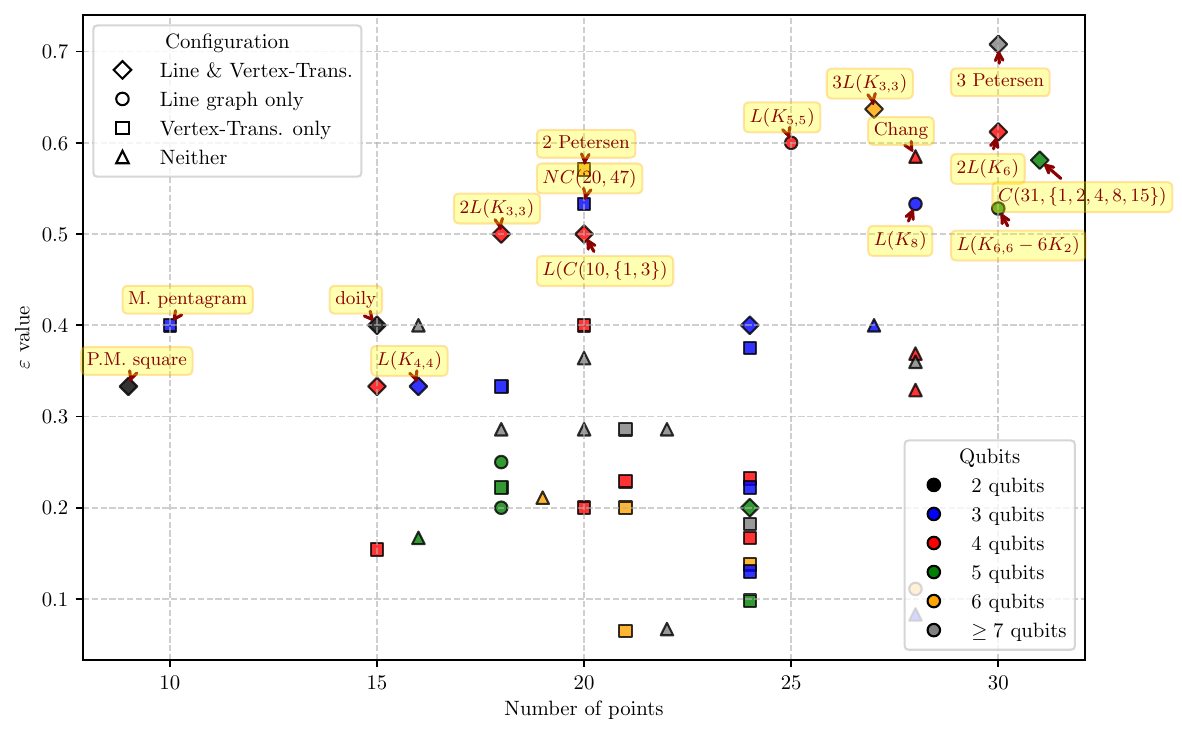}
        \caption{$\varepsilon$ over the number of observables.}
        \label{fig:sub1}
    \end{subfigure}
    \hfill
    \begin{subfigure}[b]{0.99\textwidth}
        \centering
        \includegraphics[width=1.0\textwidth]{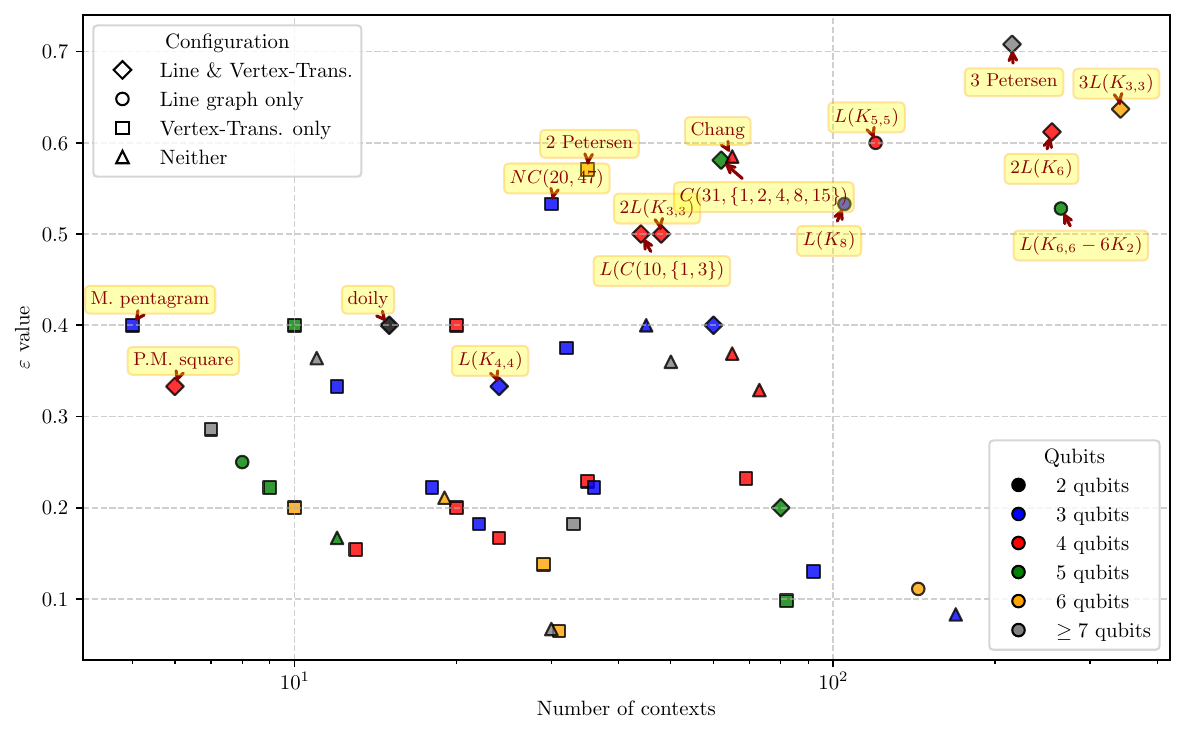}
        \caption{$\varepsilon$ over the number of contexts, the latter being shown in logarithmic scale.}
        \label{fig:sub2}
    \end{subfigure}
    \caption{
        Plots of the $\varepsilon$ values for all the graphs tested, respectively against their number of observables and contexts, with the shapes of the points showing if the configurations are line graphs, and/or vertex-transitive graphs, as well as colors representing their required number of qubits.
        \label{graphplot}
    }
\end{figure}

\begin{table}[htbp]
\centering
\begin{tabular}{lrlll}
\hline
Type & Total & Valid & $n \geq 15$ & Contextual \\
\hline
\url{houseofgraphs.org}$(\leq 30)$  & $13864$ & $2087(15.05\%)$ & $2640(19.04\%)$ & $44(0.32\%)$\\
Vertex-Transitive$(\leq 24)$& $18948$ & $1827(9.64\%)$ & 0 & $34(0.18\%)$\\
\hline
\end{tabular}
\caption{Number of graphs searched, valid hypergrams (i.e. having at least one context) with less than 15 qubits, those with more than 15 qubits that couldn't be checked by \textsc{Qontextium} (before checking assignability) and contextual hypergrams found for each database searched, with their respective presence ratio in the database}
\label{totalTable}
\end{table}

\begin{table}[htbp]
\centering
\begin{tabular}{lrrrrrr}
\hline
Anticommutation graph & HoG id & $|V|$ & $|H|$ & $n$ & $d$ & $\varepsilon$ \\
\hline
Rook $3 \square 3$ / $L(K_{3,3})$ (Peres-Mermin square) & 6607 & 9  & 6  & 2 & 1  & 0.333 \\
Petersen (Mermin pentagram) & 660   & 10 & 5  & 3 & 1  & 0.400 \\
$L(K_6)$ (Doily)            & 54409 & 15 & 15 & 2 & 3  & 0.400 \\
MS3-29~\cite{TLC22v2}       &       & 29 & 33 & 3 & 7  & 0.424 \\
\hline
$NC(20,47)$                 & 54279 & 20 & 30 & 3 & 8  & 0.533 \\
$L(C(10,\{1,3\}))$            & 52168 & 20 & 44 & 4 & 11 & 0.500 \\
Rook $5 \square 5$ / $L(K_{5,5})$                & 51213 & 25 & 120& 4 & 36 & 0.600 \\
Chang                       & 6626  & 28 & 65 & 4 & 19 & 0.585 \\
$L(K_8)$                    & 54672 & 28 & 105& 3 & 28 & 0.533\\
$L(K_{6,6} - 6K_2)$         & 54732 & 30 & 265& 5 & 70 & 0.528 \\

\hline
$2 L(K_{3,3})$&&18& 48 & 4 & 12 & 0.500 \\
2 Petersen  & & 20 & 35 & 6 & 10 & 0.571 \\

\hline
Rook $4 \square 4$ / $L(K_{4,4})$ & 30317 & 16 & 24& 3 & 4 & 0.333\\

$C(31,\{1,2,4,8,15\})$      & 51217 & 31 & 62 & 5 & 18 & 0.581 \\
$2 L(K_6)$    & & 30 & 255& 4 & 78  & 0.612\\
$3 L(K_{3,3})$& & 27 & 342& 6 & 109 & 0.637\\
3 Petersen    & & 30 & 215& 9 & 76  & 0.707\\

Rook $6 \square 6$ / $L(K_{6,6})$ & 55009 & 36 & 720& 5 & $\leq$ 280 & $\leq$ 0.778\\
Triangular 10 / $L(K_{10})$ & 55626 & 45 & 945& 4 & $\leq$ 381 & $\leq$ 0.806\\

\hline
\end{tabular}
\caption{Comparison of contextual configurations. The first block represents already known contextual configurations and the configuration yielding the previous record~\cite{TLC22v2}, along with their anticommutation graph when known. The second block configurations with the highest contextuality ratio found during the \url{houseofgraphs.org} research phase, and the third one represents the vertex-transitive phase of the search (those which weren't present in the former one). The last block represents values for the extended search on hypergrams of circulant graphs, line graphs and disjoint graphs following~\Cref{linegraphthm}.
Values of $d$ marked with $\leq$ are upper bounds
obtained by the heuristic solver~\cite{MSGHK24}, the others are exact.}
\label{resultsTable}
\end{table}

The statistics about the graphs studied in the \url{houseofgraphs.org} database and the vertex-transitive ones are shown in~\Cref{totalTable}, where we can observe that few graphs lead to assignable hypergrams. We attribute this to the parity condition of \Cref{hsDef} being a demanding requirement on an independent set. Moreover, of the 13864 graphs searched, only $2087/11224 = 18.6\%$ yielded a non-empty hypergraph support, at that rate the 2640 graphs beyond Qontextium's reach due to needing more than 15 qubits would contribute on the order of five hundred further configurations. However, high-rank graphs leave a smaller kernel, meaning there is less room for contexts to appear, so we expect the true figure to be lower still. A strong argument towards this hypothesis is that the highest number of qubits for a contextual configuration found during the search was 10. 
The most striking result of this research is that we are now able to find configurations whose $\varepsilon$ is above $0.424$ and even $0.5$, the previous unpublished record set by the $44$-configuration in~\cite[Section 7.5]{muller-thesis}, that we identify here as $L(C(10,\{1,3\}))$.
As stated in~\Cref{dirsearch}, the data seems to back the heuristic of a high number of contexts increasing $\varepsilon$. Additionally, most high $\varepsilon$ graphs are line graphs of other graphs (although this is not the case for all of them, the Petersen graph yielding the Mermin Pentagram being the simplest example.) This observation led to the discovery of~\Cref{linegraphthm}. The results are plotted in~\Cref{graphplot} and all the configurations with an $\varepsilon$ greater or equal than 0.5 along with some interesting others are shown in~\Cref{resultsTable}.

Before going through the next two families of graphs that seem the most interesting, we mention here that the presence of a lot of circulant
graphs motivated us to look for more such graphs, with up to 33 vertices. Only one configuration came out of this search but its $\varepsilon = 0.581$
makes it worth being included in~\Cref{resultsTable}. 

\subsection{Line graphs}

\begin{theorem}\label{linegraphthm}
Let $H = (V, E)$ be a simple graph, and let $G = L(H)$ be its line graph with adjacency matrix $A$. Every perfect matching $M$ of $H$ forms a context in $G$; that is, $M$ is an independent set in $G$ satisfying
\[
\sum_{e \in M} A_e \equiv \mathbf{0} \pmod 2.
\]
\end{theorem}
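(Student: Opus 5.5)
Two things need checking: that $M$ is an independent set in $G = L(H)$, and that the columns $A_e$, $e\in M$, sum to zero over $\mathbb{F}_2$. By the graphical condition of \Cref{vecCnsQAprop} (equivalently \Cref{hsDef}), the second is the same as saying that every vertex $f$ of $G$ is adjacent to an even number of elements of $M$. I will also check the corner cases where $f$ itself lies in $M$, since the row sum runs over all vertices of $G$.

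\emph{Independence.} By \Cref{lineGraphDef}, two vertices $e,e'$ of $L(H)$ are adjacent exactly when the edges $e\neq e'$ of $H$ share an endpoint. In a perfect matching every vertex of $H$ is incident to exactly one edge of $M$, so distinct edges of $M$ are disjoint. Hence no two elements of $M$ are adjacent in $G$, and $M$ is an independent set.

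\emph{Parity.} Fix an arbitrary edge $f=\{x,y\}$ of $H$. The coordinate of $\sum_{e\in M}A_e$ at $f$ counts the edges $e\in M$, $e\neq f$, that share an endpoint with $f$. Since $M$ is perfect, $x$ lies in a unique edge $m_x\in M$ and $y$ in a unique edge $m_y\in M$. Every edge of $M$ meeting $f$ is one of these two.
\begin{itemize}
\item If $f\in M$, then $m_x=m_y=f$. Because $A$ has zero diagonal (no loops), $f$ contributes nothing, and the count is $0$.
\item If $f\notin M$, then $m_x\neq f$ and $m_y\neq f$. Moreover $m_x\neq m_y$: otherwise $m_x$ would contain both $x$ and $y$, so $m_x=\{x,y\}=f$ because $H$ is simple. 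Each of $m_x,m_y$ meets $f$ in exactly one endpoint, so it is counted once, and the count is $2$.
\end{itemize}
In both cases the count is even, so $\sum_{e\in M}A_e\equiv\mathbf 0\pmod 2$.

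The only step needing care is the second case. There one must use simplicity of $H$ to rule out $m_x=m_y$, and make sure an edge sharing both endpoints with $f$ is not double-counted; in a multigraph this would fail. The argument is otherwise direct, and I would also remark that it uses only that $M$ covers every vertex exactly once. I would close by noting that $L(H)$ need not be reduced in general. The statement only claims that $M$ satisfies the context conditions for $G$, so this does not affect the proof, but whenever $G$ is reduced it gives $M\in HS(G)$.
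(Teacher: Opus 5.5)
Your proof is correct and follows the same route as the paper's. Independence comes from the disjointness of matching edges. Parity comes from a case split on whether $f \in M$, where simplicity of $H$ forces the matching edges at the two endpoints of $f \notin M$ to be distinct, so the count is $2$. The only cosmetic difference is that for $f \in M$ you use the zero diagonal of $A$, whereas the paper uses the independence of $M$.
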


\begin{proof}
To establish that $M$ is a context in $L(H)$, two conditions must hold:
\begin{enumerate}
    \item \textbf{Independent Set:} Vertices in $L(H)$ correspond to edges in $H$, with adjacency defined by sharing an endpoint. Because $M$ is a matching in $H$, no two edges in $M$ share an endpoint. Consequently, $M$ forms an independent set in $L(H)$.
    \item \textbf{Kernel Condition over $\mathbb{F}_2$:} We require that for every vertex $f \in V(L(H)) = E(H)$, the column sum $S_f = \sum_{e \in M} A_{e,f} \equiv 0 \pmod 2$.
    \begin{itemize}
        \item \textbf{Case 1 ($f \in M$):} Because $M$ is independent in $L(H)$, $f$ is not adjacent to any vertex $e \in M$. Thus, $A_{e,f} = 0$ for all $e \in M$, yielding $S_f = 0 \equiv 0 \pmod 2$.
        \item \textbf{Case 2 ($f \notin M$):} Let $f = \{u, v\} \in E(H) \setminus M$. Because $M$ is a perfect matching, endpoints $u$ and $v$ are each incident to a unique edge in $M$, denoted $e_u$ and $e_v$ respectively. Since $H$ is a simple graph and $f \notin M$, $e_u$ and $e_v$ are distinct from each other and from $f$. Hence, $f$ shares an endpoint with exactly two distinct edges in $M$ ($e_u$ at vertex $u$, and $e_v$ at $v$). This implies $A_{e_u,f} = 1$, $A_{e_v,f} = 1$, and $A_{e,f} = 0$ for all $e \in M \setminus \{e_u, e_v\}$.
    \end{itemize}
    Summing over all elements in $M$ gives:
    \[
    S_f = A_{e_u,f} + A_{e_v,f} = 1 + 1 = 2 \equiv 0 \pmod 2.
    \]
\end{enumerate}
Because $S_f \equiv 0 \pmod 2$ holds for every $f \in V(L(H))$, we have $\sum_{e \in M} A_e \equiv \mathbf{0} \pmod 2$. Thus, $M$ is a context in $L(H)$.
\end{proof}

\Cref{linegraphthm} gives a direct way to produce configurations with
many contexts since the contexts supported by $L(G)$ contain
the perfect matchings of $G$, it suffices to choose base graphs
$G$ with many of them. We therefore concentrated a final phase of
the search on the following two families: the triangular
graphs $L(K_{2k})$ and the rook's graphs $L(K_{m,m})$, whose base
graphs maximise the number of perfect matchings among graphs on a given
number of vertices, and whose first assignable hypergram representatives are 
respectively the doily at $k=3$ and the Peres-Mermin square at $m=3$.

The rook's graph $K_4 \square K_4$, on $16$ observables and
$24$ contexts, has contextuality degree $4$ and thus
$\varepsilon = 1/3$, the same value as the Peres-Mermin square
$K_3 \square K_3$, whereas $K_5 \square K_5$ reaches $\varepsilon = 0.6$.
The ratio is therefore not monotone along the family, and the growth of
the number of perfect matchings alone does not necessarily account for the high
values $\varepsilon$ observed. Beyond these sizes, the exact computation of the degree
became the limiting factor. Indeed for $K_6 \square K_6$ ($36$ observables,
$720$ contexts) and for $L(K_{10})$ ($45$ observables, $945$ contexts),
the SAT solver did not terminate within our computational budget. The
heuristic method of~\cite{MSGHK24} still returns values for these two
configurations. Although they are only upper bounds on the degree, these 
values have a chance of being tight since on every configuration of
\Cref{resultsTable} for which both methods terminated they returned the
same value.

\subsection{Disjoint graphs}

\label{sec:disjoint}
 
Looking at the configurations built from the vertex-transitive database, only two 
configurations had $\varepsilon > 0.5$. A study of these graphs unveiled the fact that 
they are a disjoint union of two $L(K_{3,3})$ for one of them, and two $\mathrm{Petersen}$
for the other one, or in other words two copies of the anticommutation graphs of the Peres-Mermin 
square and of the Mermin pentagram respectively. Contexts
combine freely across components, since for a disjoint union independence and the parity
condition of~\Cref{hsDef} are checked componentwise, giving
\[
|HS(G_1 \sqcup \cdots \sqcup G_k)| = \prod_{i=1}^{k}\bigl(|HS(G_i)|+1\bigr) - 1 ,
\]
multiplicative where the number of observables is only additive: $7^2-1=48$ and $6^2-1=35$
contexts, for $\varepsilon = 0.5$ and $0.571$ against $1/3$ and $2/5$ for a single copy.

For this reason we looked for more of them. The \url{houseofgraphs.org} database holds few
disconnected graphs on at most 30 vertices, and that pass returned nothing new beyond one
configuration of ratio $\varepsilon = 0.333$. Extrapolating the classical examples proved
far more effective: $3L(K_{3,3})$ reaches $\varepsilon = 0.637$ and $2L(K_6)$ reaches
$0.612$, while $3\,\mathrm{Petersen}$, with 30 observables, 215 contexts and $d = 76$,
attains $\varepsilon = 0.707$ — our record, and the largest instance the SAT solver still
handled within our computational budget.

\section{Finite Geometric Rendering of Some Configurations}
\label{secGeo}

To get further insights into the nature of contextual configurations found, it is worth providing
finite geometric representations of some of their most interesting representatives. This brief
finite geometric dive will illustrate that contextual sets are intricately linked with the structure
of specific symplectic polar spaces and, in particular, with some distinguished objects (like Fano
planes, ovoids and/or hyperbolic quadrics) living in them. 

\subsection{$NC(20,47)$ Configuration}
The anticommutation graph $\mathrm{NC}(20,47)$ is a vertex-transitive graph on $20$
vertices which is not a Cayley graph, its label 47 being its number assigned in the~\url{houseofgraphs.org} database as one of the 82 20-vertex graphs with both properties.

This three-qubit configuration has 20 observables
and 30 contexts of size four each. As a context represents an affine plane of order two, by adding a `line at infinity'
we can extend/complete each of them to a Fano plane. In doing so, we find that each such line is shared
by two planes, so we have only 15 lines. These 15 lines form a (quadratic) doily that lies
on a unique hyperbolic quadric of $W_3$ whose index is $YYI$.
As there are 1008 doilies lying on quadrics in $W_3$, and each quadric host just one configuration,
we get altogether 1008 configurations of this type. Going the other way round, pick up one hyperbolic quadric in
$W_3$ and one of 28 doilies lying on it. There are three planes through each line of the doily, so 45 planes altogether.
Then one disregards those 15 planes (one per each line of the doily) that pass through
the same point. From the remaining 30 planes (forming two systems of generators on the selected quadric)
one removes the 15 lines (as well as 15 points) of the doily and  gets the configuration isomorphic
to our NC(20,47)-one.

Moreover, from the above construction it follows that each such configuration contains
12 Mermin pentagrams. They come in pairs, each pair being associated with a particular spread of lines of the doily
\cite{LSz17}. An example is shown in Figure \ref{fig:penta}. \\

\begin{figure}[t]
\centerline{\includegraphics[width=12truecm,clip=]{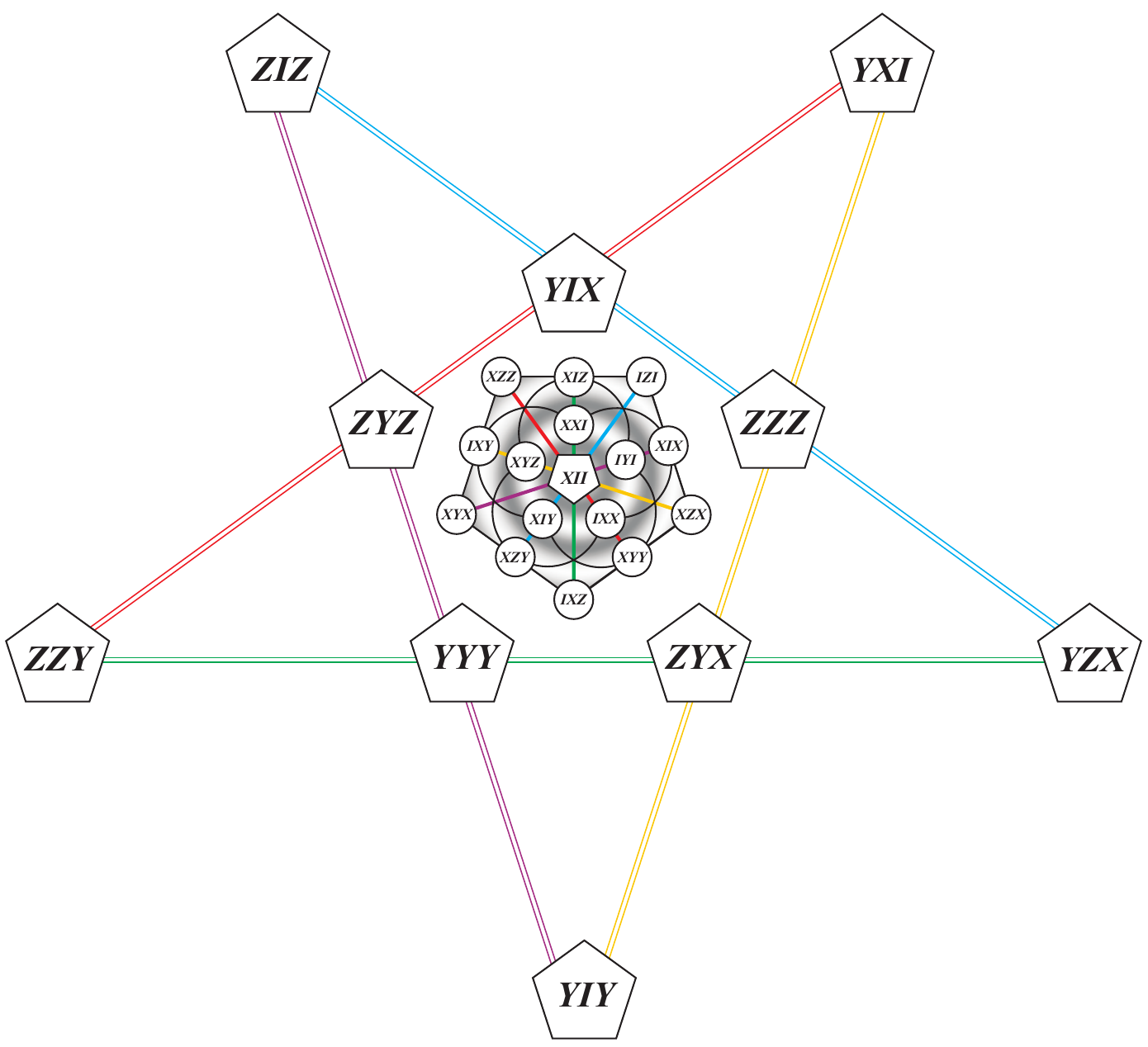}}
\vspace*{.2cm}
\caption{A doily (middle) in $W_3$ lying on a certain hyperbolic quadric, one of its spread of lines (five colored pairwise disjoint lines) and one
of the two associated Mermin pentagrams. An edge of the pentagram of a given color forms with the line of the doily that is of the same color a Fano plane. \label{fig:penta}}
\end{figure}

\subsection{Triangular 8 ($L(K_8)$) Configuration}
This is again a three-qubit configuration, having now 28 observables and 105 four-element
contexts. It contains a {\it unique} Conwell heptad, namely 
$IIX$, $IIZ$, $IXY$, $IZY$, $XYY$, $ZYY$ and $YYY$. This heptad defines a {\it unique}
hyperbolic quadric of index $YIY$, and the 28 observables of our
configuration are nothing but the 28 exterior points to this quadric. We also have 15 contexts 
through each point of the heptad, which originate from 15 planes passing
through each point; the $(7 \times 15 =)$105 `lines at infinity' we get by completing each context to a Fano plane all lie on our
hyperbolic quadric. As there are 36 hyperbolic quadrics in $W_3$ and each is associated with eight Conwell heptads,
we will have altogether $(36 \times 8=)$ 288 configurations of this type.\\

\subsection{$L(K_{5,5})$ Configuration}
This four-qubit configuration of 25 observables and
120 five-element contexts is highly symmetric. Let us call a set
of five pairwise anticommuting observables whose product is different from
$IIII$ a pentad. Then our set of 25 observables contains exactly ten such pentads;
moreover, it can be partitioned into five pentads in two different ways as illustrated
in Figure \ref{fig:25-120}. \\

\begin{figure}[t]
\centerline{
    \begin{tikzpicture}[
  every node/.style={
    circle, draw, fill=white,
    inner sep=0pt, minimum size=11mm,
    font=\itshape\footnotesize,
    scale=0.8
  },
  line width=0.5pt,
  scale=0.8
]

\foreach \lab [count=\k from 0] in {IIIX,IIIZ,IIXY,IIZY,IXYY}
  \node (n1-\k) at (90-72*\k:2.05) {\lab};
\foreach \lab [count=\k from 0] in {IZYZ,IZYX,IZZI,IZXI,IYII}
  \node (n2-\k) at (90-72*\k:3.35) {\lab};
\foreach \lab [count=\k from 0] in {XYYZ,XYYX,XYZI,XYXI,XZII}
  \node (n3-\k) at (90-72*\k:4.65) {\lab};
\foreach \lab [count=\k from 0] in {YYYZ,YYYX,YYZI,YYXI,YZII}
  \node (n4-\k) at (90-72*\k:5.95) {\lab};
\foreach \lab [count=\k from 0] in {ZYYZ,ZYYX,ZYZI,ZYXI,ZZII}
  \node (n5-\k) at (90-72*\k:7.25) {\lab};

\begin{scope}[on background layer]
  \foreach \lay in {1,...,5}{%
    \foreach \k in {0,...,4}{%
      \pgfmathtruncatemacro{\nextk}{mod(\k+1,5)}
      \draw (n\lay-\k) -- (n\lay-\nextk);
    }
  }
  \foreach \k in {0,...,4}{%
    \foreach \lay in {1,...,4}{%
      \pgfmathtruncatemacro{\nextlay}{\lay+1}
      \draw (n\lay-\k) -- (n\nextlay-\k);
    }
  }
\end{scope}

\end{tikzpicture}
}
\vspace*{.2cm}
\caption{A portrayal of the fact that 25 observables of our $L(K_{5,5})$-configuration can be partitioned
into five pentads in two different ways; the five pentads in one set are
represented by pentagons, the ones in the other set by lines. Each context thus picks up one observable from each pentad.\label{fig:25-120}}
\end{figure}

\subsection{Chang Configuration}
The anticommutation graph is one of the Chang graphs i.e. the three strongly regular graphs
whose parameters $(28,12,6,4)$, are the same as $L(K_8)$.
This four-qubit configuration featuring 28 observables and 65 four-element contexts is also remarkable. 
It has 18 observables each of which is on ten contexts, whilst each of the remaining 10 observables belongs
only to eight contexts. Moreover, these ten observables themselves form a four-qubit Mermin pentagram whose five edges
are
\begin{center}
$IIZY-IZZY-XIII-XZII$,

$IIZY-ZXZY-IXII-ZIII$,

$XIYY-IXYY-IXII-XIII$,

$XIYY-ZXZY-ZYXI-XZII$,

$IXYY-IZZY-ZYXI-ZIII$. 
\end{center}

We further observe that 18 observables of order ten can be split into six
triples such that each triple defines a plane in the ambient
PG(7,2) and the corresponding six planes share a line, namely
$IIXX-IIXZ-IIIY$. The shape of the whole configuration is schematically depicted in
Figure \ref{fig:28-65}.\\

\begin{figure}[pth!]
\centerline{\includegraphics[width=1.2\textwidth,clip=]{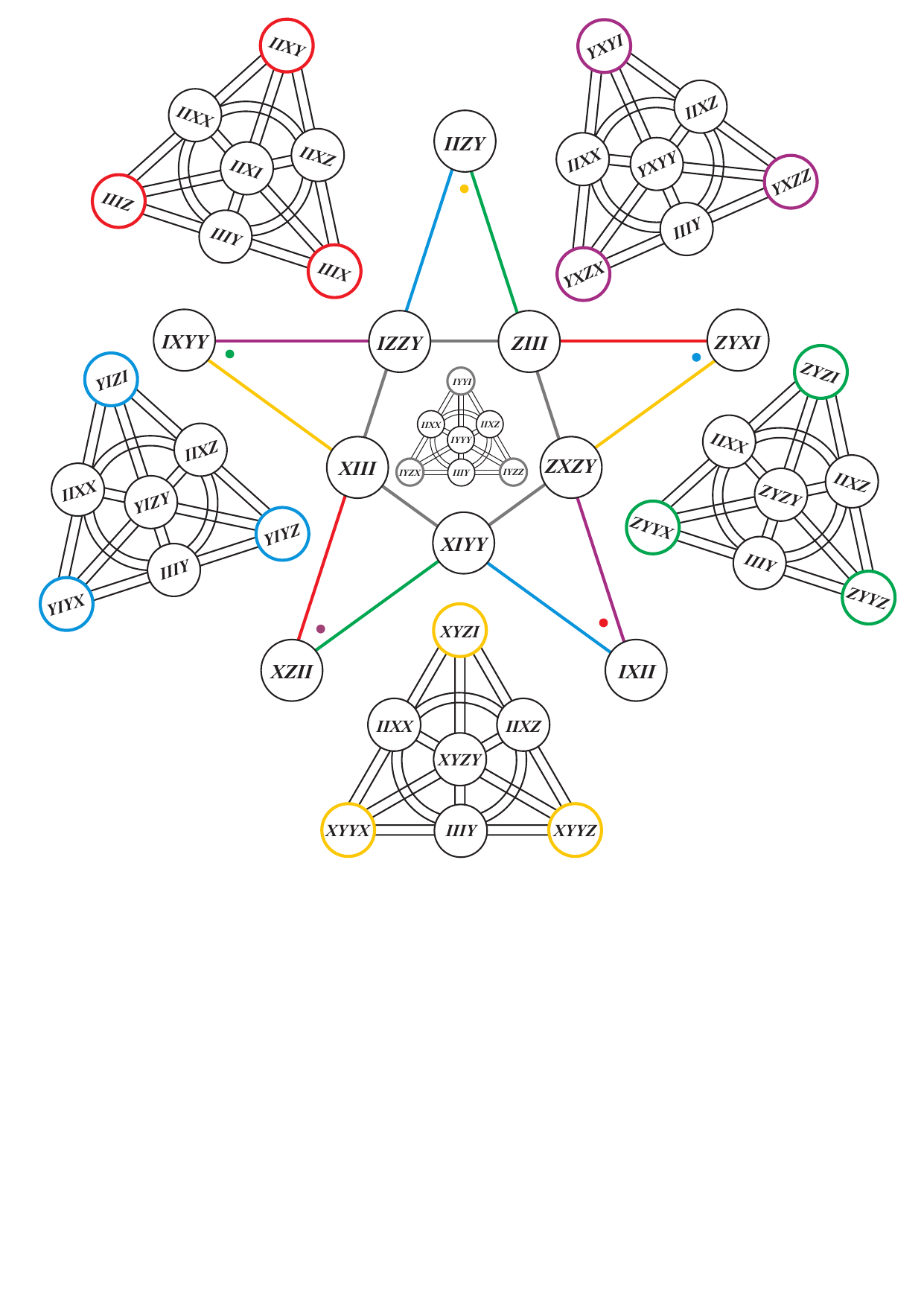}}
\vspace*{-6cm}
\caption{The composition of the Chang configuration.  The central (Mermin) pentagram features ten points of
order eight and the six concurrent Fano planes (the line of concurrence is always
represented by a circle) comprise 18 observables of order ten (colored).
Each observable of the pentagram is the meet of three colors and it commutes
with each observable in the Fano planes of the corresponding colors.\label{fig:28-65}}
\end{figure}

\subsection{$C(31,\{1,2,4,8,15\})$  Configuration}
A configuration living in $W_5$ that contains 31 observables and 62 six-element contexts.
It is also highly symmetric as each observable belongs to the same number of contexts -- 12.
Given any observable, there are ten observables anticommuting with it. Any pair of anticommuting
observables share three observables that anticommute with both of them. Remarkably, these three
observables are mutually commuting and so define a unique Fano plane in $W_5$; hence we have
a distinguished set of $(31 \times 10/2 =)$ 155 planes associated with this configuration.
\subsection{Triangular 10 ($L(K_{10})$) Configuration}
A $45-945$ configuration that is
basically a four-qubit analog of the $L(K_8)$-one. Unlike the three-qubit case, we have now as many as ten
distinguished ovoids (that are now of size nine), two per each observable; one of them is $IIIX$, $IIIZ$, $IIXY$, $IIZY$, $IXYY$, $IZYY$,
$XYYY$, $ZYYY$ and $YYYY$. Furthermore, whereas a three-qubit Conwell heptad is disjoint from its associated quadric, this ovoid 
lies {\it fully} on a certain hyperbolic quadric, namely that of index $YIYI$.
Through each of these nine observables there pass 135 PG(3,2)s of $W_4$,
of which 30 are located on the quadric itself. Each of the remaining
105 three-spaces hosts a single five-element context, so one gets 105 contexts
per each ovoid's observable and, hence, $(105 \times 9 =$) 945 contexts altogether.

\subsection{Configurations from two disjoint graphs}
The two disjoint configurations differ from any previously discussed one by the fact
that their contexts are of {\it two} different sizes. The 2 $ L(K_{3,3})$-configuration, which 
lives in $W_4$, has 18 observables that are located on 12 contexts of size three (i.\,e., lines)
and 36 ones of size six (each representing a pair of disjoint lines). Remarkably, the 12 three-element 
contexts form nothing but a pair of disjoint four-qubit {\it Peres-Mermin squares} and each six-element 
context picks up one line from either square. Similarly, the 2 Petersen-configuration, which belongs to $W_6$,
features 20 observables occupying ten contexts of size four (i.\,e., affine planes of order two) and 25 ones
of size eight (a pair of disjoint affine planes). Analogously to the preceding case, the ten four-element 
contexts are split into two disjoint equally-sized sets, each representing a six-qubit {\it Mermin pentagram}, 
and any of the remaining eight-element contexts has equal, four-element share with either pentagram.

\subsection{Configurations from three disjoint graphs}
These configurations exhibit the largest $\varepsilon$ found so far.
The 3 $ L(K_{3,3})$ configuration of this kind belongs to $W_6$ 
and has 27 observables and 
342 contexts. The 27 observables are located on three mutually
disjoint Peres-Mermin squares such that each observable in one
square commutes with any observable in the other two squares.
The 342 contexts split into 18 contexts of size three (each representing
a line of a square),
108 contexts of size six (each representing a pair of disjoint lines from
two different squares) and 216 contexts of size nine (each picking a line
from each of the three squares). Its degree of contextuality $d=109$, 
which yields $\varepsilon = 0.637$.

The 3 Petersen configuration lives in $W_9$.
It has 30 observables and 215 contexts. The 30 observables are equally distributed into
three pairwise skew Mermin pentagrams such that each observable in one pentagram
commutes with any other observable of the remaining two pentagrams.
Out of its 215 contexts, 30 are of size four (each representing an edge of a pentagram),
75 of size eight (each comprising a pair of edges from two different pentagrams) and
125 of size twelve (each featuring an edge from each of the three pentagrams). Having
$d=76$, its $\varepsilon = 0.707$, the largest confirmed one.

\section{Discussion and Outlooks}
\label{seCdiscussion}

Because the contextuality degree is an invariant of the underlying
hypergram~\cite{mg25}, the search for observable-based Kochen-Specker
proofs can be conducted entirely at the level of graphs, with no
reference to qubits or observables. We have made that search concrete by systematically
taking into account all
the contexts compatible with an anticommutation graph $G$ forming its
hypergraph support $HS(G)$. Narrowing the search to such canonical
candidate configurations seems like a good approach to 
maximise the contextuality degree, and allowed us to focus our research 
on graphs alone.

Applied to the \url{houseofgraphs.org} database and to censuses of vertex-transitive graphs, this research recovers the Peres-Mermin square, the Mermin pentagram and the doily, and produces configurations reaching $\varepsilon = 0.707$, for disjoint unions and $\varepsilon=0.6$ for connected anticommutation graphs. The previous published record was $0.424$~\cite{TLC22v2}, although, as explained in the introduction, their search was restricted to magic sets. Two structural features stand out among the best of them. The first is being a line graph, explained by~\Cref{linegraphthm}: perfect matchings of $G$ are a source of contexts for $L(G)$. That theorem identifies the Peres-Mermin square and the doily as the first members of two infinite families, but it does not predict the degree: $K_4 \square K_4$ has four times as many contexts as $K_3 \square K_3$ for the same ratio $1/3$, while $K_5 \square K_5$ reaches 0.6. Many contexts are therefore necessary for a high $\varepsilon$, but plainly not sufficient. The second feature, and the one behind our record, is disconnectedness: independence and the parity condition of~\Cref{hsDef} being componentwise, the number of contexts of a disjoint union is multiplicative while the number of observables is only additive. Three copies of the Petersen graph, i.e. three Mermin pentagrams, thus attain $\varepsilon = 0.707$, ahead of 3 $ L(K_{3,3})$ at $0.637$ and 2 $ L(K_6)$ at $0.612$ -- stacking copies of the smallest classical examples beats every connected configuration we found, and whether $\varepsilon$ tends to a limit or to 1 as the number of copies grows is open.
Furthermore, it could entirely be possible that a subset of $HS(G)$ could
lead to a higher $\varepsilon$ than the whole support itself.
That gap is the first perspective. A useful next step could be to find an 
invariant that could predict the ratio $\varepsilon$ better than only counting
the number of contexts.

The second perspective is the size barrier. Our exhaustive computations
stop where the SAT solver does, around a thousand contexts, and the
graphs whose structure most suggests a high $\varepsilon$ are exactly those which
are out of reach: $K_6 \square K_6$, $L(K_{10})$ and unions of more than three graphs already are.
Enlarging the search to bigger graphs is therefore contingent on
computing degrees differently. Unable to provide exact values, the
heuristic approach of~\cite{MSGHK24} remains usable and agreed with the solver on
every instance where both terminated, but drawing conclusions from that remains 
risky without knowing if these bounds are tight. 

A last perspective concerns the exploration itself. The search we
describe is a fixed pipeline applied to fixed graph databases, and its
main cost is human: choosing which families to try next, noticing that
the good configurations were line graphs, and conjecturing the statement
of \Cref{linegraphthm}. Large language models are increasingly used to
automate such loops, generating candidate constructions, testing them
against a verifier and refining them accordingly. The setting is
favourable here, because every step of our pipeline is machine-checkable.
An automated loop could thus propose graph
families and conjectured supports and have them confirmed or refuted
without human intervention, the human task shifting to identifying 
and checking the most useful statements.
\section*{Acknowledgments} 

We would like to thank Petr Pracna for his contribution in the production of~\Cref{fig:penta} and~\ref{fig:28-65}.

\section*{Financial support}

This work has received funding from ChistEra-2023/05/Y/ST2/00005 under the project 
Modern Device Independent Cryptography (MoDIC). This work was also supported in part 
by the Slovak VEGA grant agency, project number 2/0043/24.



\section*{Use of generative AI}

Generative artificial intelligence tools were used during the preparation
of this article, as follows. Google Gemini 3.5 Flash was used to explore possible
regularities among the configurations returned by the search, 
and discovered~\Cref{linegraphthm}, which was reformulated and proved by the
authors.
Anthropic Claude Opus 5 was used during the writing process of this article.
The computational results of \Cref{resultsTable} and \Cref{graphplot} were produced by the
authors own software and no AI tool took part in their computation. The
authors take full responsibility for the content of this article.

\section*{Data Availability Statement} 

The data that support the findings of this study are openly available in
\url{https://doi.org/10.5281/zenodo.22790412}

\printbibliography

\end{document}